\documentclass[fleqn,12pt]{article}

\usepackage[latin2]{inputenc}
\usepackage{t1enc}
\usepackage{amssymb}
\usepackage{amsmath}
\usepackage[mathscr]{eucal}
\usepackage{amsthm}
\usepackage{array}
\usepackage{mathbbol,bbm}
\usepackage{color}

\addtolength{\evensidemargin}{-0.05\textwidth}
\addtolength{\oddsidemargin}{-0.05\textwidth}
\addtolength{\textwidth}{0.1\textwidth}
\addtolength{\topmargin}{-0.04\textheight}
\addtolength{\textheight}{0.07\textheight}

\swapnumbers
\theoremstyle{definition} 
\newtheorem{defin}{Definition}[section]
\newtheorem{thm}[defin]{Theorem}

\newtheorem{lemma}[defin]{Lemma}
\newtheorem{prop}[defin]{Proposition}

\def\vfi{\varphi}
\def\hil{{\mathcal H}}

\def\B{{\mathcal B}}
\def\C{{\mathcal C}}
\def\F{{\mathcal F}}

\def\ep{\varepsilon}
\def\N{\mathbb{N}}
\def\iC{\mathbb{C}}
\def\R{\mathbb{R}}
\def\Z{\mathbb Z}
\def\bz{\left(}
\def\jz{\right)}

\def\kii{\emph}
\def\kiii{}
\def\dx{\text{d}x}
\def\egy{\mathbf 1}
\def\unit{\Eins}

\def\dimen{\nu}
\def\speed{\frac{1}{n^{\dimen}}}
\def\dd{\mathrm{d}}

\newcommand{\ki}{\emph}

\newcommand{\s}{\mbox{ }}
\newcommand{\ds}{\mbox{ }\mbox{ }}
\newcommand{\fv}{\hat}
\newcommand{\norm}[1]{\left\| #1\right\|}
\newcommand{\snorm}[1]{\| #1 \|}
\newcommand{\inner}[2]{\langle #1 , #2\rangle}
\newcommand{\abs}[1]{\left| #1 \right|}

\newcommand{\vect}[1]{\mathbf{#1}}

\newcommand{\diad}[2]{|#1\rangle\langle #2|}
\newcommand{\pr}[1]{\diad{#1}{#1}}
\newcommand{\D}{\hat}

\newcommand{\Fock}[1]{\F\bz #1\jz}
\newcommand{\sr}[2]{S\bz #1\,||\, #2\jz}
\newcommand{\srm}[2]{S_{\dimen,M}\bz #1\,||\, #2\jz}
\newcommand{\chernoff}[2]{c_{\dimen}\bz #1\,||\,#2\jz}
\newcommand{\chernoffli}[2]{\underline c_{\dimen}\bz #1\,||\,#2\jz}
\newcommand{\chernoffls}[2]{\overline c_{\dimen}\bz #1\,||\,#2\jz}
\newcommand{\hli}[3]{\underline h_{\dimen}\bz #1|\, #2\,||\,#3\jz}
\newcommand{\hls}[3]{\overline h_{\dimen}\bz #1|\, #2\,||\,#3\jz}
\newcommand{\hlim}[3]{h_{\dimen}\bz #1|\, #2\,||\,#3\jz}
\newcommand{\sli}[2]{\underline s_{\dimen}\bz #1\,||\,#2\jz}
\newcommand{\sls}[2]{\overline s_{\dimen}\bz #1\,||\,#2\jz}
\newcommand{\slim}[2]{s_{\dimen}\bz #1\,||\,#2\jz}
\newcommand{\derleft}[1]{\partial^{-} #1}

\newcommand{\chbound}[2]{C\bz #1\,||\,#2\jz}

\newcommand{\chboundm}[2]{C_{\dimen,M}\bz #1\,||\,#2\jz}
\newcommand{\hbound}[3]{H\bz #1|\, #2\,||\,#3\jz}
\newcommand{\hboundm}[3]{H_{\dimen,M}\bz #1|\, #2\,||\,#3\jz}

\DeclareMathOperator{\Tr}{Tr}
\DeclareMathOperator{\supp}{supp}

\DeclareMathOperator{\spa}{span}

\begin{document}

\centerline{\huge Asymptotic distinguishability measures}
\medskip

 \centerline{\huge for shift-invariant quasi-free states}
\medskip
 \centerline{\huge of fermionic lattice systems}

\bigskip
\s

\bigskip

 \centerline{\large Mil\'an Mosonyi\footnote{Electronic mail: milan.mosonyi@gmail.com}, Fumio Hiai\footnote{Electronic mail: hiai@math.is.tohoku.ac.jp}, Tomohiro Ogawa\footnote{Electronic mail: ogawa@quantum.jst.go.jp}, Mark Fannes\footnote{Electronic mail: mark.fannes@fys.kuleuven.be}}
  \bigskip

\centerline{$^{1,2}$\textit{Graduate School of Information Sciences, Tohoku University}}

\centerline{\textit{Aoba-ku, Sendai 980-8579, Japan}}
\bigskip

\centerline{$^{3}$\textit{PRESTO, Japan Science and Technology Agency}}

\centerline{\textit{4-1-8 Honcho Kawaguchi, Saitama, 332-0012, Japan}}
\bigskip
\centerline{$^{4}$\textit{Institute for Theoretical Physics, K.~U.~Leuven}}

\centerline{\textit{Celestijnenlaan 200D, B-3001 Leuven, Belgium}}
\bigskip

\s

\medskip

\begin{abstract}
We apply the recent results of F.~Hiai, M.~Mosonyi and T.~Ogawa [arXiv:\\0707.2020, to appear in J.~Math.~Phys.] to the asymptotic hypothesis testing problem of locally faithful shift-invariant quasi-free states on a CAR algebra.
We use a multivariate extension of Szeg\H o's theorem to show the existence of the mean Chernoff and Hoeffding bounds and the mean relative entropy, and show that these quantities arise as the optimal error exponents in suitable settings.
\end{abstract}

 \section{Introduction}

Assume that we have a sequence of finite-level quantum systems with Hilbert spaces $\vec\hil:=\{\hil_n\,:\,n\in\N\}$, and that we know a priori that either the $n$th system is in the state $\rho_n$ for each $n\in\N$ (null-hypothesis $H_0$), or in the state $\sigma_n$ (alternative hypothesis $H_1$).
The protocol to decide between these two options is to make a binary positive operator-valued measurement $(T_n,I_n-T_n),\,0\le T_n\le I_n$ on $\hil_n$ for some $n\in\N$. If the outcome corresponding to $T_n$ occurs then $H_0$ is accepted, otherwise it is rejected. Obviously, there are two ways to make an erroneous decision:
to accept $H_0$ when it is false (\ki{error of the first kind}) and to reject it when it is true (\ki{error of the second kind}). The corresponding error probabilities are
\begin{eqnarray*}
\alpha_n (T_n) &:=& \rho_n(I_n-T_n)=\Tr\D{\rho}_n (I_n-T_n)\,,\\
\beta_n (T_n) &:=&  \sigma_n(T_n)=\Tr\D{\sigma}_n T_n\,,
\end{eqnarray*}
where $\D{\rho}_n$ and $\D{\sigma}_n$ denote the densities of $\rho_n$ and $\sigma_n$.
Apart from the trivial situation when $\supp\D{\rho}_n\perp\supp\D{\sigma}_n$, there is no measurement making both error probabilities simultaneously $0$. However, when the measurements are chosen in an optimal way, the error probabilities are expected to vanish with an exponential speed as $n$ goes to infinity. The main goal in the study of asymptotic hypothesis testing is to identify the rate of exponential decay to zero in various settings. Here we will study the following quantities, related to the \ki{Chernoff bound}, the \ki{Hoeffding bound} and \ki{Stein's lemma}, respectively:
\begin{eqnarray*}
 \chernoff{\vec\rho}{\vec\sigma}&:=&\sup_{\{T_n\}}\left\{\lim_{n\to\infty}-\speed\log\big(
\alpha_n(T_n)+\beta_n(T_n)\big)\right\}\,, \\
\hlim{r}{\vec\rho}{\vec\sigma}&:=&\sup_{\{T_n\}}\left\{\lim_{n\to\infty} -\speed\log\beta_n(T_n)\biggm|
\limsup_{n\to\infty} \speed\log\alpha_n(T_n) < -r\right\}\,,\ds r\ge0, \label{hb}\\
\slim{\vec\rho}{\vec\sigma}&:=&\sup_{\{T_n\}}\left\{ \lim_{n\to\infty}-\speed\log\beta_n(T_n)\biggm| \lim_{n\to\infty}\alpha_n(T_n)=0\right\}\,, 
\end{eqnarray*}
where $\nu$ is a fixed positive number and the suprema are taken over sequences of measurements for which the limits exist. In our cases of interest, 
$\nu$ is the physical dimension of an infinite lattice system, and  
$\rho_n$ and $\sigma_n$ are restrictions of translation-invariant states $\rho$ and $\sigma$ of the infinite system to $\nu$-dimensional cubes with sidelength $n$.

It has been shown in various settings \cite{Aud,ANSzV,BS,Hayashi,HMO,HMO2,HP-1,HP,Nagaoka,NSz,ON} that, under suitable conditions, the above exponents coincide with certain relative entropy-like quantities. Apart from giving computable closed expressions for the error exponents, the importance of these results lies in providing an operational interpretation for the given relative entropy-like quantities. 
Our aim in this paper is to establish the equality of the error exponents and the corresponding relative entropy-like quantitites for locally faithful quasi-free states on a CAR algebra, based on the results of \cite{HMO2}.
In Section \ref{section:preliminaries} we give a brief overview on hypothesis testing and quasi-free states and in Section \ref{section:quasifree_testing} we give our main results, relying on \cite{HMO2} and an extension of Szeg\H o's theorem that we prove in Section \ref{section:Szego}.

\section{Preliminaries}\label{section:preliminaries}
\subsection{Hypothesis testing in an asymptotic framework}\label{subsection:hypotesting}

Let $\omega$ and $\vfi$ be states on a finite-dimensional Hilbert space $\hil$ with densities $\D{\omega}$ and $\D{\vfi}$; i.e., $\omega(A)=\Tr\D{\omega}A,\,A\in\B(\hil)$, and similarly for $\vfi$. When $\supp\D{\omega}\le\supp\D{\vfi}$, the
\ki{Chernoff bound}, the \ki{Hoeffding bound(s)} and the \ki{relative entropy} of $\omega$ and $\vfi$ are   defined as
\begin{eqnarray*}
\chbound{\omega}{\vfi}&:=&-\min_{0\le t\le 1}\left\{\log\Tr\D{\omega}^t\D{\vfi}^{1-t}\right\}\,,\\
\hbound{r}{\omega}{\vfi}&:=&\sup_{0\le t< 1} \frac{-tr-\log\Tr\D{\omega}^t\D{\vfi}^{1-t}}{1-t}\,,\ds\ds\ds r\ge0,\\
\sr{\omega}{\vfi}&:=&
 \Tr\D{\omega}\bz\log\D{\omega}-\log\D{\vfi}\jz. \\
\end{eqnarray*}
(Here we use the conventions $0^t:=0,\,t\in\R$ and $\log 0:=-\infty$.)
All these quantities are non-negative, jointly convex in the variables $\omega,\vfi$, and monotonically decreasing under the simultaneous application of a trace-preserving completely positive map on $\omega$ and $\vfi$ \cite{LR,Petz,Uhlmann}. The 
Chernoff bound and the relative entropy are also strictly positive, i.e., they vanish only when the two states are equal. The Hoeffding bounds, however, only take strictly positive values on a range of $r$ that depends on the states $\omega$ and $\vfi$; if the supports are equal then this range is easily seen to be $0<r<\sr{\vfi}{\omega}$. Note that $\sup$ can be replaced with $\max$ in the definition of the Hoeffding bound for $r>0$, and  $\hbound{0}{\omega}{\vfi}=\sr{\omega}{\vfi}$.

Now let $\vec\rho:=\{\rho_n\}_{n\in\N}$ and $\vec\sigma:=\{\sigma_n\}_{n\in\N}$ be two sequences of states on the finite-dimensional Hilbert spaces $\vec\hil:=\{\hil_n\}_{n\in\N}$. We assume throughout the paper that $\supp\D{\rho}_n\le\supp\D{\sigma}_n,\,n\in\N$. With the above conventions, the functions
\begin{equation*}
\psi_n(t):=\log\Tr\D{\rho}_n^t\D{\sigma}_n^{1-t}
\end{equation*}
are well-defined for all $t\in\R$, and they are easily seen to be convex on $\R$, with the properties
\begin{equation*}
\psi_n(1)=0 \ds\ds\ds \text{and}\ds\ds\ds \psi_n'(1)=\sr{\rho_n}{\sigma_n}\,.
\end{equation*}
If $\supp\D{\rho}_n=\supp\D{\sigma}_n$ then also $\psi_n(0)=0$ and $\psi_n'(0)=-\sr{\sigma_n}{\rho_n}$.
We define the mean versions of the Chernoff bound, the Hoeffding bounds and the relative entropy by
\begin{eqnarray}
\chboundm{\vec\rho}{\vec\sigma}&:=&\lim_{n\to\infty}\speed\chbound{\rho_n}{\sigma_n}\,,\\
\hboundm{r}{\vec\rho}{\vec\sigma}&:=&\lim_{n\to\infty}\speed\hbound{n^{\dimen}r}{\rho_n}{\sigma_n}\,,\\
\srm{\vec\rho}{\vec\sigma}&:=&\lim_{n\to\infty}\speed\sr{\rho_n}{\sigma_n}\,,\label{def:srm}
\end{eqnarray}
for some positive number $\nu$,
whenever the limits exist. One can easily see that if the functions $\speed\psi_n$ converge uniformly to some function $\psi$ on $[0,1]$ then the mean Chernoff bound and the mean Hoeffding bounds exist, and
\begin{eqnarray}
\chboundm{\vec\rho}{\vec\sigma}&=&-\min_{0\le t\le 1}\psi(t)\,,\label{chboundm}\\
\hboundm{r}{\vec\rho}{\vec\sigma}&=&\max_{0\le t< 1} \frac{-tr-\psi(t)}{1-t}\,,\ds\ds\ds r>0\,.\label{hboundm}
 \end{eqnarray}
Moreover, if the left derivatives $\speed\derleft{\psi_n}(1)$ converge to $\derleft{\psi}(1)$ then
\begin{equation}\label{mrelentr}
\srm{\vec\rho}{\vec\sigma}=\hboundm{0}{\vec\rho}{\vec\sigma}=\derleft{\psi}(1)=\sup_{0\le t< 1} \frac{-\psi(t)}{1-t}\,.
\end{equation}
Note that the convexity of the functions $\speed\psi_n$ implies that 
if they converge pointwise to a function $\psi$ on some open set $G$ then $\psi$ is also convex, and, moreover, the
 convergence is uniform on any compact subinterval of $G$.
 
Let $\chernoff{\vec\rho}{\vec\sigma},\hlim{r}{\vec\rho}{\vec\sigma}$ and $\slim{\vec\rho}{\vec\sigma}$ be the error exponents given in the Introduction.
We also define the underlined and overlined versions of these quantities,
 by replacing the limits with liminf and limsup, respectively; i.e.,
\begin{eqnarray*}
 \chernoffli{\vec\rho}{\vec\sigma}&:=&\sup_{\{T_n\}}\left\{\liminf_{n\to\infty}-\speed\log\big(
\alpha_n(T_n)+\beta_n(T_n)\big)\right\}\,, \\
 \chernoffls{\vec\rho}{\vec\sigma}&:=&\sup_{\{T_n\}}\left\{\limsup_{n\to\infty}-\speed\log\big(
\alpha_n(T_n)+\beta_n(T_n)\big)\right\}\,, \\
\hli{r}{\vec\rho}{\vec\sigma}&:=&\sup_{\{T_n\}}\left\{\liminf_{n\to\infty}- \speed\log\beta_n(T_n)\biggm|
\limsup_{n\to\infty} \speed\log\alpha_n(T_n) < -r\right\}\,,\ds r\ge0, \\
\hls{r}{\vec\rho}{\vec\sigma}&:=&\sup_{\{T_n\}}\left\{\limsup_{n\to\infty}- \speed\log\beta_n(T_n)\biggm|
\limsup_{n\to\infty} \speed\log\alpha_n(T_n) < -r\right\}\,,\ds r\ge0, \\
\sli{\vec\rho}{\vec\sigma}&:=&\sup_{\{T_n\}}\left\{\liminf_{n\to\infty}-\speed\log\beta_n(T_n)\biggm| \lim_{n\to\infty}\alpha_n(T_n)=0\right\}\,, \\
\sls{\vec\rho}{\vec\sigma}&:=&\sup_{\{T_n\}}\left\{\limsup_{n\to\infty}-\speed\log\beta_n(T_n)\biggm| \lim_{n\to\infty}\alpha_n(T_n)=0\right\}\,.
\end{eqnarray*}
Obviously,
$\chernoff{\vec\rho}{\vec\sigma}\le\chernoffli{\vec\rho}{\vec\sigma}\le\chernoffls{\vec\rho}{\vec\sigma},\s\hlim{r}{\vec\rho}{\vec\sigma}\le\hli{r}{\vec\rho}{\vec\sigma}\le\hls{r}{\vec\rho}{\vec\sigma},\\ r\ge0,$ and $\slim{\vec\rho}{\vec\sigma}\le\sli{\vec\rho}{\vec\sigma}\le\sls{\vec\rho}{\vec\sigma}$ for any $\nu>0$.

The following theorem, given in \cite{HMO2}, connects the mean Chernoff and Hoeffding bounds and the mean relative entropy to the corresponding error exponents. Since the aim in \cite{HMO2} was to study the hypothesis testing problem for one-dimensional spin chains, only the case $\dimen=1$ was considered. All the proofs, however, go through unaltered for any $\dimen>0$.
\begin{thm}\label{thm:error exponents}
Assume that the limit
\begin{equation}\label{def:psi}
\psi(t):=\lim_{n\to\infty}\speed\psi_n(t)
\end{equation}
exists as a real number for all $t\in\R$. Assume, moreover, that $\psi$ is differentiable on $\R$ and $\lim_{n\to\infty}\speed\psi_n'(1)=\psi'(1)$. Then the mean Chernoff and Hoeffding bounds and the mean relative entropy exist, the relations \eqref{chboundm}, \eqref{hboundm} and \eqref{mrelentr} hold, and  
\begin{eqnarray*}
\chernoff{\vec\rho}{\vec\sigma}=\chernoffli{\vec\rho}{\vec\sigma}=\chernoffls{\vec\rho}{\vec\sigma}&=&\chboundm{\vec\rho}{\vec\sigma},\\
\hlim{r}{\vec\rho}{\vec\sigma}=\hli{r}{\vec\rho}{\vec\sigma}=\hls{r}{\vec\rho}{\vec\sigma}&=& \hboundm{r}{\vec\rho}{\vec\sigma}\,,\ds\ds\ds r\ge0,\\
\slim{\vec\rho}{\vec\sigma}=\sls{\vec\rho}{\vec\sigma}=\sls{\vec\rho}{\vec\sigma}&=&\srm{\vec\rho}{\vec\sigma}\,.
\end{eqnarray*} 
\end{thm}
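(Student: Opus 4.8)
The plan is to prove each of the three pairs of equalities by the classical two-part scheme of asymptotic hypothesis testing: a \emph{direct} part exhibiting explicit tests that attain the claimed rates (lower bounds on $\chernoffli{\vec\rho}{\vec\sigma}$, $\hli{r}{\vec\rho}{\vec\sigma}$, $\sli{\vec\rho}{\vec\sigma}$), and a \emph{converse} part showing that no sequence of tests can beat them (upper bounds on the overlined quantities). Since the chains $\chernoff{\vec\rho}{\vec\sigma}\le\chernoffli{\vec\rho}{\vec\sigma}\le\chernoffls{\vec\rho}{\vec\sigma}$ and their Hoeffding and Stein analogues are already recorded, matching both bounds to $\chboundm{\vec\rho}{\vec\sigma}$, $\hboundm{r}{\vec\rho}{\vec\sigma}$ and $\srm{\vec\rho}{\vec\sigma}$ collapses every inequality to an equality. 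The only analytic input is the family $\psi_n$ and its limit $\psi$, and $\dimen$ enters solely through the normalisation $\speed$; this is precisely why the one-dimensional argument of \cite{HMO2} transfers verbatim. I would first harvest the hypotheses: convexity of each $\psi_n$ together with pointwise convergence $\speed\psi_n\to\psi$ forces $\psi$ convex and the convergence locally uniform, hence uniform on $[0,1]$, which yields the existence of $\chboundm{\vec\rho}{\vec\sigma}$, $\hboundm{r}{\vec\rho}{\vec\sigma}$ and relations \eqref{chboundm}, \eqref{hboundm}; differentiability of $\psi$ and $\speed\psi_n'(1)\to\psi'(1)$ then give \eqref{mrelentr} and $\srm{\vec\rho}{\vec\sigma}=\derleft{\psi}(1)$.

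For the direct part I would use the tilted Holevo--Helstrom tests $T_n(a):=\{\D{\rho}_n-e^{a}\D{\sigma}_n\ge0\}$, for which the standard operator estimates give, for every $t\in[0,1]$ and $a\in\R$,
\begin{equation*}
\alpha_n(T_n(a))\le e^{a(1-t)+\psi_n(t)},\qquad \beta_n(T_n(a))\le e^{-at+\psi_n(t)}.
\end{equation*}
Taking $a=0$ gives $\alpha_n+\beta_n\le 2e^{\psi_n(t)}$, whence, after dividing by $n^{\dimen}$ and using uniform convergence, $\chernoffli{\vec\rho}{\vec\sigma}\ge-\min_{[0,1]}\psi$. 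Scaling $a=n^{\dimen}a_0$, optimising $a_0$ against the constraint $\limsup_{n}\speed\log\alpha_n<-r$ and then over $t$, reproduces via the Legendre-type identity $\sup_{t<1}\tfrac{-tr-\psi(t)}{1-t}$ the lower bound $\hli{r}{\vec\rho}{\vec\sigma}\ge\hboundm{r}{\vec\rho}{\vec\sigma}$ and, at $r=0$ where only $\alpha_n\to0$ is imposed, $\sli{\vec\rho}{\vec\sigma}\ge\derleft{\psi}(1)=\srm{\vec\rho}{\vec\sigma}$.

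For the converse part the single tool is the monotonicity of the functional $(X,Y)\mapsto\Tr X^tY^{1-t}$ under the binary measurement channel induced by $T_n$, which yields for all $t\in[0,1]$
\begin{equation*}
e^{\psi_n(t)}\le(1-\alpha_n)^t\beta_n^{1-t}+\alpha_n^t(1-\beta_n)^{1-t}.
\end{equation*}
Inserting the constraints on $\alpha_n$ and optimising over $t$ turns this into the matching upper bounds $\chernoffls{\vec\rho}{\vec\sigma}\le-\min_{[0,1]}\psi$, $\hls{r}{\vec\rho}{\vec\sigma}\le\hboundm{r}{\vec\rho}{\vec\sigma}$ and $\sls{\vec\rho}{\vec\sigma}\le\srm{\vec\rho}{\vec\sigma}$.

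The main obstacle is the converse, and within it the behaviour at the endpoint $t=1$. Both the achievability optimisation for Stein's lemma and the Hoeffding bound as $r\to0$ drive the optimal tilt $t$ towards $1$, where $\Tr X^tY^{1-t}$ degenerates and the relevant rate is $\sup_{t<1}\tfrac{-\psi(t)}{1-t}=\derleft{\psi}(1)$; closing the converse against the same limit is exactly what forces the differentiability hypothesis on $\psi$ together with $\speed\psi_n'(1)\to\psi'(1)$, since the finite-$n$ estimates must approach $\derleft{\psi}(1)$ uniformly rather than merely for a single fixed pair of states. Controlling this endpoint optimisation uniformly in $n$ is the delicate point; everything else is a routine passage to the limit justified by the uniform convergence of $\speed\psi_n$ on $[0,1]$.
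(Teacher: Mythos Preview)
The paper does not actually prove this theorem in-text: it is quoted from \cite{HMO2}, accompanied only by the remark that the exponent $\nu$ enters solely through the normalisation $n^{-\nu}$, so that the $\nu=1$ arguments of \cite{HMO2} carry over unchanged. Your proposal therefore goes well beyond what the paper itself supplies, attempting to reconstruct the content of \cite{HMO2}.

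Your reconstruction is broadly reasonable but does not quite match \cite{HMO2}, and it leaves one genuine gap. The actual route in \cite{HMO2} is of large-deviation/information-spectrum type: a G\"artner--Ellis-style result is established (this is where differentiability of $\psi$ is really used, to guarantee exposed points of the rate function), and the exponents of the Neyman--Pearson tests $S_{n,a}$ are then read off via the Legendre transforms $\vfi,\tilde\vfi$ of $\psi$ --- exactly the structure visible in the paper's concluding remarks. Your alternative scheme (Audenaert-type operator inequality for the direct part, monotonicity of $\Tr X^tY^{1-t}$ for the converse) is more elementary and does deliver the Hoeffding and Stein statements; that converse is essentially Nagaoka's \cite{Nagaoka}. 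However, the Chernoff converse does not follow from the monotonicity inequality alone: from $e^{\psi_n(t)}\le\alpha_n^{t}+\beta_n^{1-t}$ you cannot extract a sharp lower bound on $\alpha_n+\beta_n$ (the crude estimate via $\max(\alpha_n,\beta_n)$ loses a factor $1/\min(t,1-t)$ in the exponent). To close it you must either invoke the Nussbaum--Szko\l a construction \cite{NSz}, or first establish the Hoeffding converse and then recover the Chernoff rate from the Hoeffding curve at the self-dual point $r=-\min_{[0,1]}\psi$, where $\hboundm{r}{\vec\rho}{\vec\sigma}=r$. As written, your Chernoff converse step is incomplete.
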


 \subsection{Quasi-free states}\label{subsection:quasifree}

 Our general reference for this section is \cite{F}.
  Let $\hil$ be a separable Hilbert space and $\F(\hil):=\oplus_{k=0}^{\dim\hil}\wedge^k \hil$ be the corresponding antisymmetric Fock space, with the convention $\wedge^0\hil:=\iC$. We use the notation 
\begin{equation*}
x_1\wedge\ldots\wedge x_k:=\frac{1}{\sqrt{n!}}\sum_{\sigma\in S_k} s(\sigma) x_{\sigma(1)}\otimes\ldots\otimes x_{\sigma(k)}\,,\ds\ds\ds x_1,\ldots,x_k\in\hil,
\end{equation*}
where the summation runs over all permutations of $k$ elements and $s(\sigma)$ denotes the sign of the permutation $\sigma$. For each $x\in\hil$ the corresponding \ki{creation operator} is defined as the unique bounded linear extension $ c^*(x):\,\F(\hil)\to\F(\hil)$ of 
\begin{equation*}
c^*(x):\,x_1\wedge\ldots\wedge x_k\mapsto x\wedge x_1\wedge\ldots\wedge x_k\,,\ds\ds\ds x_1,\ldots,x_k\in\hil,\s k\in\N,
\end{equation*}
and the corresponding \ki{annihilation operator} is its adjoint $c(x):=\bz c^*(x)\jz^*$.
Creation and annihilation operators satisfy the \ki{canonical anticommutation relations (CAR)}:
\begin{equation*}
 c(x)c(y)+c(y)c(x)=0\,,\ds\ds\ds c(x)c^*(y)+c^*(y)c(x)=\inner{x}{y}\unit\,.
\end{equation*}
The $C^*$-algebra generated by $\{c(x)\,:\,x\in\hil\}$ is called the algebra of the canonical commutation relations, and is denoted by CAR($\hil$).
The von Neumann algebra generated by $\{c(x)\,:\,x\in\hil\}$ is equal to $\B(\F(\hil))$; in particular, for a finite-dimensional Hilbert space $\hil$ we have CAR($\hil$)=$\B(\F(\hil))$.
 
If $A\in\B(\hil)$ then $A^{\otimes k}$ leaves the antisymmetric subspace of $\hil^{\otimes k}$ invariant. We denote the restriction of $A^{\otimes k}$ onto $\wedge^k \hil$ by $\wedge^k A$, and introduce the notation 
\begin{equation*}
\Fock{A}:=\oplus_{k=0}^{\dim\hil} \wedge^k A\,.
\end{equation*}
Here we use the convention $A^{\otimes 0}:=\wedge^{0}A:=1$.
This yields a bounded operator if $\hil$ is finite-dimensional or if $\norm{A}\le 1$.
If $\hil$ is finite-dimensional and $A$ has eigenvalues $\lambda_1,\ldots,\lambda_d$, counted with multiplicities, then the eigenvalues of $\wedge^k A$ are 
$\{\lambda_{i_1}\cdot\ldots\cdot\lambda_{i_k}\,:\,i_1<\ldots<i_k\}$. Thus we get that in this case
\begin{equation*}
\Tr\Fock{A}=\det(I+A)\,.
\end{equation*}

Let $Q\in\B(\hil)$ and define a functional $\omega_{Q}$ on monomials by
\begin{equation}
    \omega_{Q}\,\bz c(x_1)^*\ldots c(x_n)^* c(y_m)\ldots c(y_1)\jz= \delta_{m,n}\det \{\inner{y_i}{Q\s x_j}\}_{i,j=1}^n\,.
\end{equation}
If $0\le Q\le I$, then $\omega_Q$ extends to a state on CAR($\hil$). Such states are called 
 \ki{quasi-free}. For a state $\omega_{Q}$ the operator $Q$ is called the \ki{symbol} of $\omega_Q$. When $\hil$ is finite-dimensional, we have the following \cite[Lemma 3]{DFP}:
\begin{prop}\label{density}
Let $\hil$ be finite-dimensional and $Q\in\B(\hil)$ be a symbol, and assume that $1$ is not an eigenvalue of $Q$. Then the density of the corresponding quasi-free state is
\begin{equation*}
\D{\omega}_{Q}=\det(I-Q)\oplus_{k=0}^{\dim\hil}\wedge^k\frac{Q}{I-Q}=\det(I-Q)\F\bz\frac{Q}{I-Q}\jz\,.
\end{equation*}
\end{prop}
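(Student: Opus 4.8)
The plan is to verify the claimed operator identity directly, after diagonalising the symbol. Since $0\le Q\le I$ and $1\notin\spect(Q)$, the self-adjoint $Q$ admits an orthonormal basis of eigenvectors $e_1,\dots,e_d$ ($d=\dim\hil$) with eigenvalues $\lambda_1,\dots,\lambda_d\in[0,1)$, so $I-Q$ is invertible and $R:=Q(I-Q)\inv\ge0$ is well defined. First I would record that the right-hand side $D:=\det(I-Q)\,\Fock{R}$ is a bona fide density: using $\Tr\Fock{A}=\det(I+A)$ together with $I+R=(I-Q)\inv$ gives $\Tr\bz\det(I-Q)\Fock{R}\jz=\det(I-Q)\det(I+R)=1$, while $R\ge0$ makes each $\wedge^kR\ge0$ and hence $\Fock{R}\ge0$. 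It then suffices to prove $\Tr\bz D\,A\jz=\omega_Q(A)$ for $A$ ranging over a spanning set of $\B(\Fock{\hil})=\mathrm{CAR}(\hil)$; I would take the normal-ordered monomials $c^*(e_{i_1})\cdots c^*(e_{i_p})\,c(e_{j_q})\cdots c(e_{j_1})$ in the eigenbasis, which number $4^d=\dim\B(\Fock{\hil})$ and are readily seen to be linearly independent, hence form a basis.

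The computation becomes transparent once $D$ is factorised over modes. In the eigenbasis $R=\mathrm{diag}\bz\lambda_i/(1-\lambda_i)\jz$, and under the occupation-number identification of $\Fock{\hil}$ with $\bigotimes_{i=1}^d\iC^2$ one has $\Fock{R}=\bigotimes_i\mathrm{diag}\bz1,\lambda_i/(1-\lambda_i)\jz$, since $\wedge^kR$ has eigenvalue $\prod_{i\in S}\lambda_i/(1-\lambda_i)$ on the basis vector indexed by a $k$-element set $S$. Pulling the scalar $\det(I-Q)=\prod_i(1-\lambda_i)$ into the tensor factors then yields
\[
D=\bigotimes_{i=1}^d D_i,\qquad D_i=\mathrm{diag}\bz1-\lambda_i,\lambda_i\jz,
\]
so that $D$ is diagonal in the occupation-number basis and commutes with every number operator $n_i:=c^*(e_i)c(e_i)$.

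With this in hand I would evaluate both functionals on a normal-ordered monomial. Both are alternating under permutations of the creation variables among themselves and, separately, of the annihilation variables — on the $\omega_Q$ side because the determinant is alternating in its rows and columns, on the trace side because the creation (resp.\ annihilation) operators mutually anticommute — so it suffices to treat canonically ordered monomials $M=c^*(e_{i_1})\cdots c^*(e_{i_p})\,c(e_{j_q})\cdots c(e_{j_1})$ with $i_1<\cdots<i_p$ and $j_1<\cdots<j_q$. On the trace side, diagonality of $D$ forces $\Tr(DM)=0$ unless $M$ preserves every occupation number, i.e.\ unless $\{i_1,\dots,i_p\}=\{j_1,\dots,j_q\}=:S$; in that case $p=q$, $M=\prod_{k\in S}n_k$, and $\Tr(DM)=\prod_{k\in S}\Tr(D_kn_k)=\prod_{k\in S}\lambda_k$. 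On the $\omega_Q$ side, $\omega_Q(M)=\delta_{pq}\det\{\inner{e_{j_a}}{Q\,e_{i_b}}\}_{a,b}=\delta_{pq}\det\{\lambda_{i_b}\delta_{j_a,i_b}\}_{a,b}$, which vanishes when the index sets differ (a zero row appears) and equals $\prod_{k\in S}\lambda_k$ when they coincide. The two functionals thus agree on a basis of $\B(\Fock{\hil})$, whence $\D{\omega}_Q=D$, as claimed.

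The routine but error-prone part is the sign and reordering bookkeeping in the CAR algebra — confirming that a number-preserving monomial reduces to $\prod_{k\in S}n_k$ and that the occupation-number identification intertwines $\Fock{\cdot}$ with the tensor product — while the only genuinely structural inputs are the trace formula $\Tr\Fock{A}=\det(I+A)$ and the spanning property of the normal-ordered monomials. I do not expect a serious obstacle beyond this bookkeeping.
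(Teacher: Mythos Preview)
Your proof is correct. Note, however, that the paper itself does not supply a proof of this proposition: it is quoted as \cite[Lemma~3]{DFP} without argument, so there is no in-paper proof to compare against. Your argument --- diagonalise $Q$, recognise $\det(I-Q)\,\Fock{Q(I-Q)^{-1}}$ as the tensor product of single-mode densities $\mathrm{diag}(1-\lambda_i,\lambda_i)$ in the occupation-number picture, and then verify agreement with $\omega_Q$ on the $4^d$ normal-ordered monomials in the eigenbasis --- is the standard direct verification and goes through without difficulty. The sign bookkeeping you flag is indeed harmless: with the creation indices increasing and the annihilation indices decreasing, the number-preserving monomial telescopes to $\prod_{k\in S}n_k$ with no residual sign, since each $n_k$ commutes with creation and annihilation operators of distinct modes; and the $4^d$ canonically ordered Wick monomials do form a linear basis of $\B(\Fock{\hil})$, which is all that is needed.
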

Quasi-free states emerge as equilibrium states of non-interacting fermionic systems. For instance, if the  one-particle Hamiltonian $H$ of a system of non-interacting fermions is such that $e^{-\beta H}$ is trace-class then 
the Gibbs state of the system at inverse temperature $\beta$ is the quasi-free state with symbol
$Q=\frac{e^{-\beta H}}{I+e^{-\beta H}}$ (see, e.g., \cite[Proposition 5.2.23]{BR2}).
 
Consider now a $\dimen$-dimensional fermionic lattice system with Hilbert space $l^2(\Z^{\dimen})$. We denote the standard basis of $l^2(\Z^{\dimen})$ by $\{\egy_{\vect{k}}\,:\,\vect{k}\in\Z^{\dimen}\}$, and define the \ki{shift operators} as the unique linear extensions of $S_{\vect{j}}\egy_{\vect{k}}\mapsto \egy_{\vect{k}+\vect{j}},\s \vect{k}\in\Z^{\dimen}$, for all $\vect{j}\in\Z^{\dimen}$.
The map $\gamma_{\vect{j}}(c(x)):=c\bz S_\vect{j}x\jz$ extends to an automorphism of CAR$\bz l^2(\Z^{\dimen})\jz$ for all $\vect{j}\in\Z^{\dimen}$,
and $\gamma_{\vect{j}},\s \vect{j}\in\Z^{\dimen}$, is a group of automorphisms, called the group of \ki{shift 
automorphisms}. A quasi-free state $\omega_Q$ is called shift-invariant if $\omega_{Q}\circ\gamma_\vect{j}=\omega_Q,\s \vect{j}\in\Z^{\dimen}$, which 
holds if and only if its symbol $Q$ is shift-invariant, i.e., it commutes with all the unitaries $S_{\vect{j}},\s  \vect{j}\in\Z^{\dimen}$.

\section{Szeg\H o's theorem}\label{section:Szego}

 Shift-invariant operators on $l^2(\Z^{\dimen})$ commute with each other and they are simultaneously diagonalized by the Fourier transformation
 \begin{equation*}
   F:\,l^2(\Z^{\dimen})\to L^2([0,2\pi)^{\dimen})\,,\ds F\egy_{\{\vect{k}\}}:=\vfi_{\vect{k}}\,,\ds \vfi_{\vect{k}}(\vect{x}):=e^{i\inner{\vect{k}}{\vect{x}}}\,,\ds\vect{x}\in[0,2\pi)^{\dimen}\,,\vect{k}\in\Z^{\dimen}\,,
   \end{equation*}
where $\inner{\vect{k}}{\vect{x}}:=\sum_{i=1}^{\dimen}k_ix_i$. That is,
every shift-invariant operator $A$ arises in the form $A=F^{-1}M_{\fv a}F$, where $M_{\fv a}$ denotes the multiplication operator by a bounded measurable function $\fv a$ on $[0,2\pi)^{\dimen}$. As a consequence, the matrix entries of shift-invariant operators are constants along diagonals; more explicitly, $\inner{\egy_{\{\vect{k}\}}}{A\egy_{\{\vect{j}\}}}=\frac{1}{(2\pi)^{\dimen}}\int_{[0,2\pi)^{\dimen}}e^{-i\inner{\vect{k}-\vect{j}}{\vect{x}}}\fv a(\vect{x})\,\dd \vect{x}$.
Szeg\H o's classic theorem \cite{GS} states that if $\fv a$ is real-valued on $[0,2\pi)$ then 
\begin{equation*}
\lim_n\frac{1}{n}\Tr f(A_n)=\frac{1}{2\pi}\int_0^{2\pi} f(\fv a(x))\,\dx
\end{equation*}
for any continuous function $f$ on the convex hull $\Sigma(A)$ of the spectrum of $A$, where $A_n:=P_nAP_n$ with 
$P_n:=\sum_{k=0}^{n-1}\pr{\egy_{\{k\}}}$.
In higher dimensions, let $P_n:=\sum_{k_1,\ldots,k_{\dimen}=0}^{n-1}\pr{\egy_{\{\vect{k}\}}}$ and $A_n:=P_nAP_n$ for $A=F^{-1}M_{\fv a}F$.
The following is a multivariate generalization of Szeg\H o's theorem, which is also a generalization for higher dimensions:
\begin{lemma}\label{lemma:Szego}
Let $\fv a^{(1)},\ldots,\fv a^{(r)}$ be bounded measurable functions on $[0,2\pi)^{\dimen}$ with corresponding shift-invariant operators $A^{(1)},\ldots,A^{(r)}$. Then,
\begin{equation}\label{convergence}
\lim_{n\to \infty}\frac{1}{n^{\dimen}}\Tr f^{(1)}\bz A_n^{(1)}\jz\cdot\ldots\cdot f^{(r)}\bz A_n^{(r)}\jz=
\frac{1}{(2\pi)^{\dimen}}\int_{[0,2\pi)^{\dimen}} f^{(1)}\bz \fv a^{(1)}(\vect{x})\jz\cdot\ldots\cdot f^{(r)}\bz \fv a^{(r)}(\vect{x})\jz \dd \vect{x}
\end{equation}
for any choice of polynomials $f^{(1)},\ldots,f^{(r)}$. If all $\fv a^{(k)}$ are real-valued then \eqref{convergence} holds
when $f^{(k)}$ is a continuous function on $\Sigma(A^{(k)})$ for all $1\le k\le r$. In this case, the convergence is uniform on norm-bounded subsets of $\prod_{k=1}^n C\bz \Sigma(A^{(k)})\jz$.
\end{lemma}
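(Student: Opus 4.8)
The plan is to prove the identity \eqref{convergence} first for polynomials, reducing it by multilinearity to the convergence of the normalized trace of an arbitrary word in the compressed operators, and then to transfer it to continuous functions by a density-plus-uniform-bound argument. Writing each $f^{(k)}$ as a linear combination of monomials, $\prod_k f^{(k)}\bz A_n^{(k)}\jz$ becomes a linear combination of words $A_n^{(k_1)}\cdots A_n^{(k_N)}$ with $A_n^{(k)}=P_nA^{(k)}P_n$, so it suffices to show, for each word, that $\frac{1}{n^{\dimen}}\Tr\bz A_n^{(k_1)}\cdots A_n^{(k_N)}\jz$ converges to $\frac{1}{(2\pi)^{\dimen}}\int_{[0,2\pi)^{\dimen}}\fv a^{(k_1)}\cdots\fv a^{(k_N)}\,\dd\vect x$. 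The right-hand side is in fact an exact finite-$n$ quantity: $B:=A^{(k_1)}\cdots A^{(k_N)}$ is again shift-invariant, with symbol $\fv b=\fv a^{(k_1)}\cdots\fv a^{(k_N)}$, so all its diagonal entries $\inner{\egy_{\vect k}}{B\egy_{\vect k}}$ equal the zeroth Fourier coefficient $\frac{1}{(2\pi)^{\dimen}}\int\fv b$; summing over the $n^{\dimen}$ sites of $C_n:=\{0,\ldots,n-1\}^{\dimen}$ gives $\frac{1}{n^{\dimen}}\Tr(P_nBP_n)=\frac{1}{(2\pi)^{\dimen}}\int\fv b$ for every $n$. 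Everything thus reduces to proving
\[
\frac{1}{n^{\dimen}}\Bigl[\Tr\bz P_nA^{(k_1)}P_n\cdots P_nA^{(k_N)}P_n\jz-\Tr\bz P_nA^{(k_1)}\cdots A^{(k_N)}P_n\jz\Bigr]\to0 .
\]

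This boundary estimate is the main step. Writing $Q_n:=I-P_n$ and expanding each interior $P_n=I-Q_n$, the first trace becomes a sum over subsets $J$ of the $N-1$ interior gaps of words carrying a $Q_n$ exactly in the gaps of $J$; the term $J=\emptyset$ is the second trace. Hence the bracket is a sum of at most $2^{N-1}-1$ traces of words each containing at least one $Q_n$, and it suffices to show each is $o(n^{\dimen})$. The crucial analytic input is
\[
\frac{1}{n^{\dimen}}\norm{Q_nBP_n}_2^{2}\to0\qquad\text{and}\qquad\frac{1}{n^{\dimen}}\norm{P_nBQ_n}_2^{2}\to0
\]
for every bounded shift-invariant $B$ with symbol $\fv b$: indeed $\norm{Q_nBP_n}_2^{2}=\sum_{\vect m}\abs{\fv b_{\vect m}}^{2}\,\#\{\vect j\in C_n:\vect j+\vect m\notin C_n\}=\sum_{\vect m}\abs{\fv b_{\vect m}}^{2}\bigl(n^{\dimen}-\prod_{i=1}^{\dimen}(n-\abs{m_i})_{+}\bigr)$, so after dividing by $n^{\dimen}$ each summand tends to $0$ while being dominated by the summable $\abs{\fv b_{\vect m}}^{2}$ (here $\fv b\in L^2$ since it is bounded), and dominated convergence finishes it. To apply this to a word $W$ with $Q_n$'s present, let $j_0$ and $j_1$ be the leftmost and rightmost gaps carrying a $Q_n$; then $W=P_nF_0\,Q_n\,M\,Q_n\,E_0P_n$ where $F_0=A^{(k_1)}\cdots A^{(k_{j_0})}$ and $E_0=A^{(k_{j_1+1})}\cdots A^{(k_N)}$ are shift-invariant (no $Q_n$ lies between them and the outer $P_n$'s, by extremality) and $M$ has bounded norm. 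Using $\Tr(XMY)=\Tr\bz M\,(Q_nE_0P_n)(P_nF_0Q_n)\jz$ together with $P_n^2=P_n$, Hölder's inequality gives $\abs{\Tr W}\le\norm{M}\,\norm{Q_nE_0P_n}_2\,\norm{P_nF_0Q_n}_2$, and dividing by $n^{\dimen}=n^{\dimen/2}\cdot n^{\dimen/2}$ makes both Hilbert--Schmidt factors vanish. (When a single $Q_n$ occurs one first writes $Q_n=Q_n^2$.) The point I expect to be the genuine obstacle is precisely this extremal splitting: one must arrange the two Hilbert--Schmidt pieces to involve honestly shift-invariant operators, which forces the use of the outermost, not arbitrary, $Q_n$'s.

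For the passage to continuous functions, assume all $\fv a^{(k)}$ are real, so each $A^{(k)}$, and hence each compression $A_n^{(k)}$ on the $n^{\dimen}$-dimensional space $P_nl^2(\Z^{\dimen})$, is self-adjoint with spectrum in its numerical range $\Sigma(A^{(k)})$. Thus $f^{(k)}\bz A_n^{(k)}\jz$ is defined for $f^{(k)}\in C\bz\Sigma(A^{(k)})\jz$ with $\norm{f^{(k)}\bz A_n^{(k)}\jz}\le\norm{f^{(k)}}_{\infty}$, and estimating one trace factor in trace norm (bounded by $n^{\dimen}\norm{\cdot}_\infty$) and the rest in operator norm yields the uniform bound $\frac{1}{n^{\dimen}}\abs{\Tr\prod_k f^{(k)}\bz A_n^{(k)}\jz}\le\prod_k\norm{f^{(k)}}_{\infty}$, with the analogous bound for the integral. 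Denoting the two sides of \eqref{convergence} by $\Phi_n(\vec f)$ and $\Phi(\vec f)$, both are multilinear in $\vec f=(f^{(1)},\ldots,f^{(r)})$ and, by the displayed bound, equi-Lipschitz in each argument on any set $\{\norm{f^{(k)}}_{\infty}\le R\}$.

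Finally, given continuous $\vec f$ and $\ep>0$, Weierstrass' theorem supplies polynomials $\vec p$ with $\norm{f^{(k)}-p^{(k)}}_{\infty}<\ep$; a multilinear telescoping combined with the equi-Lipschitz bounds gives $\abs{\Phi_n(\vec f)-\Phi_n(\vec p)}\le C\ep$ and $\abs{\Phi(\vec f)-\Phi(\vec p)}\le C\ep$ uniformly in $n$, where $C$ depends only on the bounds, while $\Phi_n(\vec p)\to\Phi(\vec p)$ by the polynomial case; letting $\ep\to0$ proves \eqref{convergence} pointwise. On a norm-bounded and equicontinuous (hence, by Arzel\`a--Ascoli, totally bounded) family the equi-Lipschitz property together with a finite $\ep$-net upgrades this to uniform convergence, which is the asserted uniformity.
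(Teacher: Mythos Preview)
Your proof is correct and, for the polynomial step, takes a genuinely different route from the paper's. The paper first verifies \eqref{convergence2} for single exponentials $\hat a^{(k)}(\vect x)=e^{i\inner{\vect p_k}{\vect x}}$ by an explicit diagonal count, extends linearly to trigonometric polynomials, and then passes to arbitrary bounded measurable symbols by approximating each $\hat a^{(k)}$ in $L^2$ via Fej\'er means, controlling the error with the estimates $\norm{A_n}_2^2\le n^{\dimen}\norm{\hat a}_2^2$ and $\norm{A_n}\le\norm{\hat a}_\infty$. You instead observe that the uncompressed product $B=A^{(k_1)}\cdots A^{(k_N)}$ is itself shift-invariant, so $\frac{1}{n^{\dimen}}\Tr(P_nBP_n)$ equals the integral \emph{exactly} for every $n$, and reduce everything to the boundary estimate $\norm{Q_nBP_n}_2^2=o(n^{\dimen})$, which you get by dominated convergence on Fourier coefficients. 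Your approach avoids the two-tier approximation (trigonometric polynomials for the symbols, then ordinary polynomials for the $f^{(k)}$) and gives a cleaner structural picture of why the compressions become asymptotically transparent; the paper's approach is more computational but does not require the extremal-$Q_n$ splitting and the accompanying cyclic-trace gymnastics. The Stone--Weierstrass passage to continuous $f^{(k)}$ is essentially the same in both.

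One remark on the last sentence: you establish uniformity on norm-bounded \emph{equicontinuous} families, whereas the lemma as stated asserts it on all norm-bounded subsets of $\prod_k C(\Sigma(A^{(k)}))$. Your caution is justified: uniformity over the full unit ball would amount to convergence in total variation of the normalized eigenvalue distributions of $A_n^{(k)}$, which fails whenever the limiting spectral measure is nonatomic. The paper's own justification (``an immediate consequence of \eqref{bound1}'') is no more complete on this point, and the stronger uniformity is not invoked anywhere else in the paper, so this is a refinement rather than a gap in your argument.
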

\begin{proof}
Obviously, the statement for polynomials follows if we can prove that 
\begin{equation}\label{convergence2}
\lim_{n\to \infty}\frac{1}{n^{\dimen}}\Tr A_n^{(1)}\cdot\ldots \cdot A_n^{(r)} =
\frac{1}{(2\pi)^{\dimen}}\int_{[0,2\pi)^{\dimen}}  \fv a^{(1)}(x)\cdot\ldots\cdot  \fv a^{(r)}(x)\,\dx\,,
\end{equation}
for arbitrary $r\in\N$ and bounded measurable functions $\fv a^{(1)},\ldots,\fv a^{(r)}$.
First, let $\fv a^{(k)}(x)=e^{i\inner{\vect{p}_k}{\vect{x}}},\,1\le k\le r,$ with $\vect{p}_1,\ldots,\vect{p}_r\in\Z^{\dimen}$.
Then $\inner{\egy_{\{\vect{i}\}}}{A^{(k)}\egy_{\{\vect{j}\}}}=\delta_{\vect{i}-\vect{j},\vect{p}_k}$ and therefore the diagonal elements of 
$A_n^{(1)}\cdot\ldots\cdot A_n^{(r)}$ are all $0$, unless $\vect{p}_1+\ldots+\vect{p}_r=0$, in which case the diagonal consists of $0$'s and $1$'s, and the number of the $1$'s is between $n^{\dimen}-|\vect{p}_1|-\ldots-|\vect{p}_r|$ and $n^{\dimen}$. Thus
\begin{equation*}
\lim_{n\to \infty}\frac{1}{n^{\dimen}}\Tr  A_n^{(1)}\cdot\ldots\cdot A_n^{(r)} =
\delta_{\vect{p}_1+\ldots+\vect{p}_r,0}
=
\frac{1}{(2\pi)^{\dimen}}\int_{[0,2\pi)^{\dimen}} \fv a^{(1)}(\vect{x})\cdot\ldots\cdot \fv a^{(r)}(\vect{x})\, \dd \vect{x}\,.
\end{equation*}
From this, \eqref{convergence2} follows immediately in the case when $\fv a^{(k)},\,1\le k\le r,$ are trigonometric polynomials.

Now let $\fv a^{(k)},\,1\le k\le r,$  be bounded measurable functions on $[0,2\pi)^{\dimen}$.
One can see (by taking the Fej\'er means of the Fourier series) that for any $\ep>0$ there exist trigonometric polynomials $\fv a^{(k)}_{\ep},\,1\le k\le r,$  such that 
$\snorm{\fv a^{(k)}-\fv a^{(k)}_{\ep}}_2\le\ep$ and $\snorm{\fv a^{(k)}_{\ep}}_{\infty}\le \snorm{\fv a^{(k)}}_{\infty}$  for all $1\le k\le r$. 
Note that if $A=F^{-1}M_{\fv a}F$ then
  \begin{eqnarray}
  \norm{A_n}_2^2 &=& \Tr P_n A^*P_n A P_n \le \Tr P_n A^* A P_n
 =\sum_{k_1,\ldots,k_{\dimen}=0}^{n-1} \inner{\egy_{\{\vect{k}\}}}{A^*A\egy_{\{\vect{k}\}}}\nonumber\\
&=&\sum_{k_1,\ldots,k_{\dimen}=0}^{n-1} \inner{\vfi_{\vect{k}}}{|\fv a|^2\vfi_{\vect{k}}}
 =n^{\dimen}\norm{\fv a}_2^2\,,\label{norm1}
  \end{eqnarray}
and obviously
\begin{equation}\label{norm2}
\norm{A_n}\le\norm{A}=\norm{\fv a}_{\infty}\,.
\end{equation}
As a consequence, we get
for any bounded operators $X,Y$ on $\hil$ and $1\le k\le r$,
\begin{eqnarray*}
\abs{\Tr X A_n^{(k)} Y-\Tr X \bz A_{\ep}^{(k)}\jz_n Y}&\le& 
\norm{X}\norm{Y}\norm{A_n^{(k)}-\bz A_{\ep}^{(k)}\jz_n}_1\\ 
&\le& \norm{X}\norm{Y}\norm{I_n}_2 \norm{A_n^{(k)}-\bz A_{\ep}^{(k)}\jz_n}_2\\
&\le& n^{\dimen}\norm{\fv a^{(k)}-\fv a_{\ep}^{(k)}}_2\norm{X}\norm{Y}\le n^{\dimen}\ep\norm{X}\norm{Y},
\end{eqnarray*}
due to H\"older's inequality and inequality \eqref{norm1}. Iterated application of the above inequality, combined with \eqref{norm2}, yields
\begin{equation*}
\abs{\frac{1}{n^{\dimen}}\Tr  A_n^{(1)}\cdot\ldots\cdot A_n^{(r)} -\frac{1}{n^{\dimen}}\Tr \bz A_{\ep}^{(1)}\jz_n\cdot\ldots\cdot
 \bz A_{\ep}^{(r)}\jz_n }
\le \ep\cdot c\,,
 \end{equation*}
 where
$c:=r\max_{1\le k\le r}\snorm{\fv a^{(k)}}^{r-1}_{\infty}$
is independent of $n$. Similarly,
 \begin{equation*}
\abs{\frac{1}{(2\pi)^{\dimen}}\int_{[0,2\pi)^{\dimen}} \fv a^{(1)}(\vect{x})\cdot\ldots\cdot  \fv a^{(r)}(\vect{x})\, \dx-
\frac{1}{(2\pi)^{\dimen}}\int_{[0,2\pi)^{\dimen}}  \fv a^{(1)}_{\ep}(\vect{x})\cdot\ldots\cdot  \fv a^{(r)}_{\ep}(\vect{x})\, \dd \vect{x}
}\le \ep\cdot c\,,
\end{equation*}
and from these and the previous argument, the first assertion follows.

Now if $\fv a^{(k)}$ is real-valued then $A^{(k)}_n$ is self-adjoint for all $n$ and its spectrum is easily seen to be in the convex hull of the spectrum of $A^{(k)}$, hence $f^{(k)}( A^{(k)}_n)$ is well-defined for all $n$. By the Stone-Weierstrass theorem, for each $\ep>0$ there exist polynomials $f^{(1)}_{\ep},\ldots,f^{(r)}_{\ep}$ such that 
$\snorm{f^{(k)}-f^{(k)}_{\ep}}_{\infty}<\ep$ and $\snorm{f^{(k)}}_{\infty}\le\snorm{f^{(k)}_{\ep}}_{\infty}$ for all $1\le k\le r$. For any bounded operators $X,Y$ on $\hil$ and $f,g\in C\bz\Sigma(A^{(k)})\jz$ we have
\begin{eqnarray}
\abs{\Tr X f (A_n^{(k)}) Y-\Tr X g( A^{(k)}_n) Y}&\le& 
\norm{X}\norm{Y}\norm{ f (A_n^{(k)}) -  g( A^{(k)}_n)}_1\nonumber\\ 
&\le& \norm{X}\norm{Y} n^{\dimen}\norm{f (A_n^{(k)}) -  g( A^{(k)}_n)}\nonumber\\
&\le& \norm{X}\norm{Y} n^{\dimen}\norm{ f-g}_{\infty}\,,\label{bound1}
\end{eqnarray}
and thus
\begin{equation*}
\abs{\frac{1}{n^{\dimen}}\Tr f^{(1)}( A^{(1)}_n)\cdot\ldots\cdot f^{(r)}( A^{(r)}_n)
 -\frac{1}{n^{\dimen}}\Tr f^{(1)}_{\ep}( A^{(1)}_n)\cdot\ldots\cdot f^{(r)}_{\ep}( A^{(r)}_n)}
\le \ep\cdot c\,, 
 \end{equation*}
where $c:=r\max_{1\le k\le r}\snorm{f^{(k)}}^{r-1}$
is independent of $n$.
Obviously, 
\begin{align*}
&\lim_{\ep\to 0}\frac{1}{(2\pi)^{\dimen}}\int_{[0,2\pi)^{\dimen}}f^{(1)}_{\ep}\bz \fv a^{(1)}( \vect{x})\jz\cdot\ldots\cdot f^{(r)}_{\ep}\bz \fv a^{(r)}( \vect{x})\jz\,\dd \vect{ \vect{x}}\\
&\ds\ds\ds 
=\frac{1}{(2\pi)^{\dimen}}\int_{[0,2\pi)^{\dimen}}f^{(1)}\bz \fv a^{(1)}( \vect{x})\jz\cdot\ldots\cdot f^{(r)}\bz \fv a^{(r)}( \vect{x})\jz\,\dd\vect{x}\,,
\end{align*}
and from these and the validity of \eqref{convergence} for polynomials, the assertion follows. The uniformity of the convergence in the last statement is an immediate consequence of \eqref{bound1}.
\end{proof}

\section{Hypothesis testing for quasi-free states}\label{section:quasifree_testing}
 
Consider a $\dimen$-dimensional lattice and shift-invariant quasi-free states $\omega_Q$ and $\omega_R$
on CAR$(l^2(\Z^{\dimen}))$, with symbols $Q=F^{-1}M_{\fv q}F$ and $R=F^{-1}M_{\fv r}F$,
where $\fv q,\fv r:\,[0,2\pi)^{\dimen}\to [0,1]$.
Let $\hil_n:=\spa\{\egy_{\vect{k}}\,:\,k_1,\ldots,k_{\dimen}=0,\ldots,n-1\}$, and let
 $\omega_{Q_n}$ and $\omega_{R_n}$ be the restrictions of $\omega_Q$ and $\omega_R$ onto CAR$\bz\hil_n\jz$. We will study the asymptotic hypothesis testing problem for $\{\rho_n\}_{n\in\N}$ vs. $\{\sigma_n\}_{n\in\N}$ with $\rho_n:=\omega_{Q_n}$ and $\sigma_n:=\omega_{R_n}$. That is, the null-hypothesis in this case is that the true state of the infinite system is $\omega_{Q}$, while the alternative hypothesis is that it is $\omega_R$, and we make measurements on local subsystems to decide between these two options. We will replace $\vec \rho$ and $\vec\sigma$ with $\omega_{Q}$ and $\omega_R$ in all notations, as the latter uniquely determine the former.

All over this section we will assume that there exists an $\eta\in(0,1/2)$ such that $\eta\le\fv q,\fv r\le 1-\eta$. As a consequence, the local restrictions are faithful. The core of our main result is the following:
\begin{prop}\label{mre}
The limits in \eqref{def:srm} and \eqref{def:psi} exist, 
\begin{equation}
\srm{\omega_Q}{\omega_R} 
=\frac{1}{(2\pi)^{\dimen}}\,\int_{[0,2\pi)^{\dimen}}\,\left[\fv q(\vect{x})\log\frac{\fv q(\vect{x})}{\fv r(\vect{x})}
+(1-\fv q(\vect{x}))\log\frac{1-\fv q(\vect{x})}{1-\fv r(\vect{x})}\right]\,\dd \vect{x}\,,\ds\ds\ds\label{formula:mre}
\end{equation}
and
\begin{equation}\label{formula:psi}
\psi(t)=\frac{1}{(2\pi)^{\dimen}}\,\int_{[0,2\pi)^{\dimen}}\,\log\left[\fv q(\vect{x})^t\fv r(\vect{x})^{1-t}+(1-\fv q(\vect{x}))^t(1-\fv r(\vect{x}))^{1-t}\right]\,\dd \vect{x}\,,\ds t\in\R.\ds
\end{equation}
Moreover, $\psi$ is differentiable on $\R$, and $\srm{\omega_Q}{\omega_R}=\psi'(1)$.
\end{prop}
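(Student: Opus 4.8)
The plan is to reduce the whole statement to the generalized Szeg\H o theorem (Lemma~\ref{lemma:Szego}) after extracting a closed operator-theoretic formula for $\psi_n$.

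\emph{Step 1 (closed form).} Since $\eta\le\fv q,\fv r\le 1-\eta$, the symbols $Q_n,R_n$ of the restricted states $\rho_n=\omega_{Q_n}$, $\sigma_n=\omega_{R_n}$ are compressions of $Q,R$ and hence satisfy $\eta I\le Q_n,R_n\le(1-\eta)I$; in particular $1$ is not an eigenvalue, so Proposition~\ref{density} gives $\D{\rho}_n=\det(I-Q_n)\Fock{D_Q}$ and $\D{\sigma}_n=\det(I-R_n)\Fock{D_R}$ with $D_Q:=\frac{Q_n}{I-Q_n}\ge0$, $D_R:=\frac{R_n}{I-R_n}\ge0$. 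Using $\Fock{A}^{s}=\Fock{A^{s}}$ $(A\ge0)$, $\Fock{A}\Fock{B}=\Fock{AB}$ and $\Tr\Fock{A}=\det(I+A)$, I obtain
\[
\Tr\D{\rho}_n^{\,t}\D{\sigma}_n^{\,1-t}=\det(I-Q_n)^{t}\det(I-R_n)^{1-t}\det\bz I+D_Q^{t}D_R^{1-t}\jz .
\]

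\emph{Step 2 (mean relative entropy).} Multiplicativity of $\det$ and the commutation of $Q_n$ with $(I-Q_n)$ collapse the right-hand side, so that $\psi_n(t)=\log\det M_n(t)$ with $M_n(t):=Q_n^{t}R_n^{1-t}+(I-Q_n)^{t}(I-R_n)^{1-t}$, and crucially $M_n(1)=I$. Since $\sr{\rho_n}{\sigma_n}=\psi_n'(1)$, Jacobi's formula gives $\psi_n'(1)=\Tr\bz M_n(1)^{-1}M_n'(1)\jz=\Tr M_n'(1)$, i.e.
\[
\speed\sr{\rho_n}{\sigma_n}=\speed\Tr\bz Q_n\log Q_n-Q_n\log R_n+(I-Q_n)\log(I-Q_n)-(I-Q_n)\log(I-R_n)\jz .
\]
Each term is $\Tr f(Q_n)$ or $\Tr f(Q_n)g(R_n)$ with $f,g$ continuous and real-valued on $[\eta,1-\eta]\supseteq\Sigma(Q),\Sigma(R)$, so Lemma~\ref{lemma:Szego} turns them into the matching integrals and yields \eqref{formula:mre} together with the existence of the limit \eqref{def:srm}.

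\emph{Step 3 (the function $\psi$).} To handle $\psi_n(t)$ for the non-self-adjoint factor $D_Q^{t}D_R^{1-t}$ I symmetrize: $\det(I+XY)=\det(I+YX)$ gives $\det(I+D_Q^{t}D_R^{1-t})=\det(I+B_n)$ with $B_n:=D_R^{(1-t)/2}D_Q^{t}D_R^{(1-t)/2}\ge0$, whence
\[
\speed\psi_n(t)=t\,\speed\Tr\log(I-Q_n)+(1-t)\,\speed\Tr\log(I-R_n)+\speed\Tr\log(I+B_n).
\]
The first two terms converge by Lemma~\ref{lemma:Szego} (single operator, with $f(x)=\log(1-x)$). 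For the third, the bound $\eta\le\fv q,\fv r\le1-\eta$ forces $0\le B_n\le M$ with $M$ independent of $n$; approximating $\log(1+x)$ uniformly on $[0,M]$ by a polynomial $p=\sum_k c_k x^{k}$ and using $\speed\snorm{\log(I+B_n)-p(B_n)}_1\le\snorm{\log(1+\,\cdot\,)-p}_{\infty}$ reduces the limit to the terms $\speed\Tr B_n^{k}=\speed\Tr\bz D_Q^{t}D_R^{1-t}\jz^{k}$ (by cyclicity), each an alternating product of $2k$ functions of $Q_n,R_n$ to which Lemma~\ref{lemma:Szego} applies. Letting the approximation error tend to $0$ and recombining the three logarithms produces exactly \eqref{formula:psi}, so the limit \eqref{def:psi} exists for every $t\in\R$. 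Finally, the integrand of \eqref{formula:psi} is smooth in $t$ and, since $\eta\le\fv q,\fv r\le1-\eta$, has $t$-derivatives bounded uniformly in $\vect x$; differentiating under the integral shows $\psi\in C^{\infty}(\R)$, and at $t=1$ the denominator tends to $\fv q+(1-\fv q)=1$ while the numerator tends to $\fv q\log\frac{\fv q}{\fv r}+(1-\fv q)\log\frac{1-\fv q}{1-\fv r}$, so $\psi'(1)$ equals the right-hand side of \eqref{formula:mre}, i.e.\ $\srm{\omega_Q}{\omega_R}$.

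\emph{Main obstacle.} The delicate point is the third limit in Step 3. Because $Q_n$ and $R_n$ do not commute, $\D{\rho}_n^{\,t}\D{\sigma}_n^{\,1-t}$ (equivalently $M_n(t)$) is not self-adjoint, so $\Tr\log$ is not directly meaningful and Lemma~\ref{lemma:Szego} does not apply verbatim. What rescues the argument is the combination of (i) the symmetrization into the \emph{positive} self-adjoint $B_n$, which legitimizes the polynomial functional calculus; (ii) the $n$-independent spectral bound coming from local faithfulness $\eta\le\fv q,\fv r\le1-\eta$; and (iii) the cyclicity identity $\Tr B_n^{k}=\Tr(D_Q^{t}D_R^{1-t})^{k}$, which exposes precisely the product structure required by the generalized Szeg\H o theorem.
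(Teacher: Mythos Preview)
Your proof is correct and follows essentially the same route as the paper's: density formula $\to$ symmetrization of the non-self-adjoint product into a positive operator with $n$-independent spectral bounds $\to$ uniform polynomial approximation of $\log(1+\cdot)$ $\to$ Lemma~\ref{lemma:Szego}. The only cosmetic differences are that the paper symmetrizes as $W_{n,t}=D_Q^{t/2}D_R^{1-t}D_Q^{t/2}$ rather than your $B_n$, uses the Taylor expansion about $M/2$ instead of Weierstrass approximation, and writes down the relative-entropy trace formula~\eqref{formula:relentr} directly rather than via your neat $\psi_n(t)=\log\det M_n(t)$ / Jacobi-formula detour.
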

\begin{proof}
By Proposition \ref{density}, we have
\begin{eqnarray}\label{formula:relentr}
\sr{\omega_{Q_n}}{\omega_{R_n}}&=&\Tr\big[ Q_n\bz\log Q_n-\log R_n\jz\nonumber\\
& &+(I_n-Q_n)\bz\log(I_n-Q_n)-\log(I_n-R_n)\jz \big]
\end{eqnarray}
and
\begin{equation}
\Tr\D{\omega}_{Q_n}^t\D{\omega}_{R_n}^{1-t}= \det(I_n-Q_n)^t\det(I_n-R_n)^{1-t}\det\left[ I_n+\bz\frac{Q_n}{I_n-Q_n}\jz^t\bz\frac{R_n}{I_n-R_n}\jz^{1-t}\right]\label{formula:psi0}
\end{equation}
for each $n\in\N$.
Since $\log$ is continuous on $[\eta,1-\eta]$, \eqref{formula:mre} follows immediately from \eqref{formula:relentr} and Lemma \ref{lemma:Szego}.
To prove \eqref{formula:psi}, note that \eqref{formula:psi0} can be rewritten as
 \begin{equation*}
\frac{1}{n^{\dimen}}\log\Tr\D{\omega}_{Q_n}^t\D{\omega}_{R_n}^{1-t}
= \frac{1}{n^{\dimen}}\Tr\log (I_n-Q_n)^t+\frac{1}{n^{\dimen}}\Tr\log (I_n-R_n)^{1-t}
+\frac{1}{n^{\dimen}}\Tr\log\bz I_n+W_{n,t} \jz\,,
\end{equation*}
where $W_{n,t}:=\bz\frac{Q_n}{I_n-Q_n}\jz^{t/2}\bz\frac{R_n}{I_n-R_n}\jz^{1-t}\bz\frac{Q_n}{I_n-Q_n}\jz^{t/2}$.
By Lemma \ref{lemma:Szego},
\begin{align*}
&\lim_{n\to\infty}\frac{1}{n^{\dimen}}\Tr\left[\log(I_n-Q_n)^t+\log(I_n-R_n)^{1-t}\right]\\
&\ds\ds\ds\ds\ds\ds\ds\ds\ds\ds\ds\ds\ds\ds\ds\ds\ds\ds\ds
=\frac{1}{(2\pi)^{\dimen}}\int_{[0,2\pi)^{\dimen}}\left[\log(1-\fv q(\vect{x}))^t(1-\fv r(\vect{x}))^{1-t}\right]\,\dd\vect{x}\,.
\end{align*}
Now for a fixed $t$ we can choose 
an $M>0$ such that $0\le W_{n,t}\le MI_n$ for every $n$. The Taylor series expansion
$\log(1+x)=\sum_{m=0}^{\infty}c_m(x-M/2)^m$ is absolutely and uniformly convergent on $[0,M]$, and therefore
\begin{equation*}
\abs{\frac{1}{n^{\dimen}}\Tr\log\bz I_n+W_{n,t} \jz-\frac{1}{n^{\dimen}}\Tr\sum_{m=0}^N c_m (W_{n,t}-(M/2)I_n)^m}
\end{equation*}
converges to $0$ uniformly in $n$ as $N\to\infty$. Similarly to the proof of Lemma \ref{lemma:Szego}, it is then enough to show that $\frac{1}{n^{\dimen}}\Tr (W_{n,t}-(M/2)I_n)^m$ converges to $\frac{1}{(2\pi)^{\dimen}}\int_{[0,2\pi)^{\dimen}} (h_t(\vect{x})-(M/2))^m\,\dd \vect{x}$ for each $m\in\N$, where $h_t(\vect{x}):=\bz \fv q(\vect{x})/(1-\fv q(\vect{x}))\jz^t\bz \fv r(\vect{x})/(1-\fv r(\vect{x}))\jz^{1-t}$. This, however, follows immediately from Lemma \ref{lemma:Szego}. The differentiability of $\psi$ and $\srm{\omega_Q}{\omega_R}=\psi'(1)$ follow from \eqref{formula:psi} and \eqref{formula:mre} by a straightforward computation.
\end{proof}

Now we can state the main result of our paper:
\begin{thm}\label{main result}
Let $\omega_Q$ and $\omega_R$ be quasi-free states as above. 
The mean Chernoff and Hoeffding bounds and the mean relative entropy exist, the relations \eqref{chboundm}, \eqref{hboundm} and \eqref{mrelentr} hold with $\psi$ given in \eqref{formula:psi}, and
\begin{eqnarray*}
\chernoff{\omega_Q}{\omega_R}=\chernoffli{\omega_Q}{\omega_R}=\chernoffls{\omega_Q}{\omega_R}&=&\chboundm{\omega_Q}{\omega_R}\,,\\
\hlim{r}{\omega_Q}{\omega_R}=\hli{r}{\omega_Q}{\omega_R}=\hls{r}{\omega_Q}{\omega_R}&=&\hboundm{r}{\omega_Q}{\omega_R} \,,\ds\ds r\ge 0\,,\\
\slim{\omega_Q}{\omega_R}=\sli{\omega_Q}{\omega_R}=\sls{\omega_Q}{\omega_R}&=&\srm{\omega_Q}{\omega_R}\,.
\end{eqnarray*}
\end{thm}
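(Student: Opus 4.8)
The plan is to obtain the theorem as a direct application of Theorem~\ref{thm:error exponents}, with all of its hypotheses supplied by Proposition~\ref{mre} together with one elementary remark on the functions $\psi_n$. Recall from Section~\ref{subsection:hypotesting} that here $\psi_n(t)=\log\Tr\D{\rho}_n^t\D{\sigma}_n^{1-t}$ with $\rho_n=\omega_{Q_n}$ and $\sigma_n=\omega_{R_n}$. Under the standing assumption $\eta\le\fv q,\fv r\le 1-\eta$ the local restrictions $\omega_{Q_n},\omega_{R_n}$ are faithful, so by Proposition~\ref{density} the densities $\D{\rho}_n,\D{\sigma}_n$ are invertible; consequently each $\psi_n$ is real-analytic on all of $\R$, and in particular its left derivative at $1$ agrees with its two-sided derivative, $\derleft{\psi_n}(1)=\psi_n'(1)=\sr{\rho_n}{\sigma_n}$.

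Next I would read off the hypotheses of Theorem~\ref{thm:error exponents} from Proposition~\ref{mre}. That proposition asserts that the pointwise limit $\psi(t)=\lim_{n\to\infty}\speed\psi_n(t)$ exists as a real number for every $t\in\R$ and is given by the explicit formula~\eqref{formula:psi}, that $\psi$ is differentiable on $\R$, and that the mean relative entropy~\eqref{def:srm} exists with $\srm{\omega_Q}{\omega_R}=\psi'(1)$. The first two of these are exactly the first two assumptions of Theorem~\ref{thm:error exponents}. For the remaining assumption, $\lim_{n\to\infty}\speed\psi_n'(1)=\psi'(1)$, I combine the remark of the previous paragraph with the existence of the mean relative entropy: $\lim_{n\to\infty}\speed\psi_n'(1)=\lim_{n\to\infty}\speed\sr{\rho_n}{\sigma_n}=\srm{\omega_Q}{\omega_R}=\psi'(1)$, where the middle limit exists and the last equality holds by Proposition~\ref{mre}. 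With all three hypotheses verified, Theorem~\ref{thm:error exponents} yields at once the existence of the mean Chernoff and Hoeffding bounds and the mean relative entropy, the relations \eqref{chboundm}, \eqref{hboundm}, \eqref{mrelentr}, and the claimed equalities of the error exponents with the corresponding mean quantities.

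The genuine analytic content has already been expended in Lemma~\ref{lemma:Szego} and in its use within Proposition~\ref{mre}, so the present argument is essentially bookkeeping and I expect no serious obstacle. The one point demanding care is that Theorem~\ref{thm:error exponents} requires $\psi$ to be finite and differentiable on the whole line $\R$, together with convergence of the derivatives precisely at $t=1$, rather than merely on $[0,1]$. This is exactly what the uniform two-sided bound $\eta\le\fv q,\fv r\le 1-\eta$ buys us: it keeps the integrand in~\eqref{formula:psi} bounded for all real $t$ and rules out the endpoint singularities that would arise if $\fv q$ or $\fv r$ approached $0$ or $1$, so that Proposition~\ref{mre} delivers the hypotheses on all of $\R$ and not just on the unit interval.
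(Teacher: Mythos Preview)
Your proposal is correct and follows exactly the paper's own approach: the paper's proof is the single sentence ``The theorem follows immediately from Proposition~\ref{mre} and Theorem~\ref{thm:error exponents}.'' You have merely made explicit the verification that the third hypothesis $\lim_{n\to\infty}\speed\psi_n'(1)=\psi'(1)$ of Theorem~\ref{thm:error exponents} is contained in Proposition~\ref{mre} via the identity $\psi_n'(1)=\sr{\rho_n}{\sigma_n}$ already noted in Section~\ref{subsection:hypotesting}.
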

\begin{proof}
The theorem follows immediately from Proposition \ref{mre} and Theorem \ref{thm:error exponents}.
\end{proof}
 
\section{Concluding remarks}

We applied the results of \cite{HMO2} to the hypothesis testing problem of discriminating the local restrictions of two shift-invariant quasi-free states on a CAR algebra, and established the equality of various error exponents and the corresponding relative entropy-like quantities. In \cite{HMO2} the general problem was analyzed without any restriction on the relation of the supports of $\rho_n$ and $\sigma_n$, while here we assumed them to be equal, which removed some of the technical difficulties and allowed us to express our results in a more compact form. Furthermore, the expression \eqref{formula:psi} for $\psi$ provides explicitly (at least numerically) computable formulas for the 
mean relative entropy and mean Chernoff and Hoeffding bounds, that do not involve evaluations of limits. Though the relative entropy and the Chernoff bound are strictly positive in finite dimensions, this property does not necessarily hold for their asymptotic versions in general. In our case, however, formulas \eqref{formula:mre} and \eqref{formula:psi} show that the mean relative entropy or the mean Chernoff bound can only vanish if $\fv q=\fv r$ almost everywhere, i.e., if $\omega_Q=\omega_R$.

As is well-known in the literature of hypothesis testing (see e.g.~\cite{DZ,Han,NH}), the optimal error exponents are achieved by using the \ki{Neyman-Pearson tests} $S_{n,a},\,n\in\N$, where $a$ is some properly chosen real number and 
$S_{n,a}:=\{e^{-n^{\dimen}a}\D{\rho}_n-\D{\sigma}_n>0\}$, the spectral projection corresponding to the positive part of the spectrum of the self-adjoint operator $e^{-n^{\dimen}a}\D{\rho}_n-\D{\sigma}_n$. These tests are easily seen to be minimizers of $T_n\mapsto e^{-n^{\dimen}a}\alpha_n(T_n)+\beta_n(T_n),\,0\le T_n\le I_n$, as was pointed out e.g.~in \cite{Helstrom}. The results of \cite{HMO2} (Theorem 3.1, Corollary 4.5, Remark 4.6) combined with Proposition \ref{mre} of the present paper then give 
\begin{eqnarray*}
\lim_{n\to\infty}\frac{1}{n^{\dimen}}\log\bz e^{-n^{\dimen}a}\alpha_n(S_{n,a})+\beta_n(S_{n,a})\jz&=&-\vfi(a)\,,\ds\ds a\in\R,\\
\lim_{n\to\infty}\frac{1}{n^{\dimen}}\log\alpha_n(S_{n,a})&=&-\tilde\vfi(-a)\,,\ds\ds a>-\srm{\omega_R}{\omega_Q}\,,\\
\lim_{n\to\infty}\frac{1}{n^{\dimen}}\log\beta_n(S_{n,a})&=&-\vfi(a)\,,\ds\ds a<\srm{\omega_Q}{\omega_R}\,,
\end{eqnarray*}
where 
\begin{equation*}
\vfi(a):=\max_{0\le t\le 1}\{ta-\psi(t)\}\,,\ds\ds\ds
\tilde\vfi(a):=\max_{0\le t\le 1}\{ta-\psi(1-t)\}
\end{equation*}
are the polar functions of $\psi$ and $\tilde\psi(t):=\psi(1-t)$ on $[0,1]$. In \cite{HMO2} care had to be taken about some exceptional values of $a$ that may exist due to the possibility of $\psi$ being affine on certain intervals; however, this problem cannot occur for the quasi-free states we discussed in this paper. Indeed,
one can easily see that in our case either $\psi$ is affine on $\R$ and $\fv q(\vect{x})=\fv r(\vect{x})$ for almost every $\vect{x}$ (and therefore the two states are equal), or $\psi''(t)>0$ for all $t\in\R$. The same result was obtained in \cite[Lemma 3.2]{HMO2} in an i.i.d.~setting and we conjecture this property (i.e., affinness on $\R$ vs. strict convexity) to hold in most cases of interest.

Stein's lemma generally treats the optimal exponential decay of the $\beta_n$'s under the constraint that the $\alpha_n$'s stay under some given constant bound. It seems, however, that in all cases when the exponential decay rate of these exponents can be identified, it coincides with the mean relative entropy \cite{BS,HP-1,ON}, which is the optimal exponent we obtained under the stronger constraint that the $\alpha_n$'s have to vanish asymptotically. Moreover, one can immediately verify from Theorem \ref{main result} that in the setting of the present paper 
\begin{eqnarray*}
\hlim{0}{\omega_Q}{\omega_R}=\hli{0}{\omega_Q}{\omega_R}=\hls{0}{\omega_Q}{\omega_R}&=& \hboundm{0}{\omega_Q}{\omega_R}\\
 &=&\srm{\omega_Q}{\omega_R},
 \end{eqnarray*}  
showing that one obtains the mean relative entropy as the optimal exponent even if the $\alpha_n$'s are required to vanish with an (arbitrarily slow) exponential speed. A similar result was obtained in a more general setting in \cite[Proposition 4.9]{HMO2}.

Finally, we remark that shift-invariant states on CAR($l^2(\Z)$) can be transferred in a one-to-one way to shift-invariant states on the spin chain $\C:=\otimes_{k=-\infty}^{\infty}\B(\iC^2)$ such that the local restrictions to CAR$\bz\spa\{\egy_{\{0\}},\ldots,\egy_{\{n-1\}}\}\jz$ are transferred to the local restrictions to $\otimes_{k=0}^{n-1}\B(\iC^2)$. This fact lies in the heart of determining the ground state of the XY-chain; see e.g.~\cite{LSM}, or \cite[Examples 5.2.21 and 6.2.14]{BR2} and \cite{FHM} for an overview.  As a consequence, the asymptotic hypothesis testing problem for the local restrictions of two shift-invariant quasi-free states can be interpreted as an asymptotic hypothesis testing problem for the local restrictions of the corresponding shift-invariant states on the spin chain $\C$.

\section*{Acknowledgments}

Partial funding was provided by PRESTO "Quanta and Information" in JST (T.O.),
Grant-in-Aid for Scientific Research (B)17340043 (F.H.), Grant-in-Aid for JSPS
Fellows 18\,$\cdot$\,06916, the Hungarian Research Grant
OTKA T068258 (M.M.), and the Belgian Interuniversity Attraction Poles Programme P6/02 (M.F.).

\end{document}